\newtheorem{theorem}{Theorem}
\newtheorem{observation}[theorem]{Observation}
\newtheorem{example}[theorem]{Example}
\newcommand\isafor{\textsf{Isa\kern-0.2exF\kern-0.2exo\kern-0.2exR}\xspace}
\newcommand\ceta{\textsf{C\kern-0.2exe\kern-0.5exT\kern-0.5exA}\xspace}
\newcommand\CSI{\textsf{CSI}\xspace}
\newcommand{\doi}[1]{}
\newcommand\sem[1]{[\![#1]\!]_{\AAA,\alpha}}
\newcommand\semz[1]{[\![#1]\!]_{\AAA,\alpha_d}}
\newcommand\VV{{\cal V}}
\newcommand\TT{{\cal T}}
\newcommand\LL{{\cal L}}
\newcommand\AAA{{\cal A}}
\newcommand\QQ{{\cal Q}}
\newcommand\final{{\cal Q}_{f}}
\newcommand\FF{{\cal F}}
\newcommand\RR{{\cal R}}
\newcommand\Ur{{\cal U}_r}
\newcommand\Uz{{\cal U}_0}
\newcommand\SSS{{\cal S}}
\newcommand\CR{\mathit{CR}}
\newcommand\CP{\mathit{CP}}
\newcommand\nCR{\neg \CR}
\newcommand\njin[3]{\mathit{NJ}_{#1}(#2,#3)}
\newcommand\njo[3]{\njin{#1,#1}{#2}{#3}}
\newcommand\nj[4]{\njin{#1,#2}{#3}{#4}}
\newcommand\m[1]{\mathsf{#1}}
\newcommand\tcap[1]{\mathit{tcap}_{#1}}
\newcommand{\from}{\leftarrow}
\newcommand{\fromBT}[2]{\mathrel{{\vphantom{\to}^{#1}_{#2}}{\from}}}
\newcommand\rTH[1]{Theorem~\ref{#1}}
\newcommand\rObs[1]{Observation~\ref{#1}}
\newcommand\rEX[1]{Example~\ref{#1}}
\newcommand\rSC[1]{Section~\ref{sect:#1}}
\newcommand{\VCLHS}[1]{\ensuremath{\forall \ell \to r \in #1.\,\ell \notin \VV}}
\newcommand{\VCSUBSET}[1]{\ensuremath{\forall \ell \to r \in #1.\,\VV(\ell) \supseteq \VV(r)}}
\newcommand{\vclhs}[1]{\mathit{VC}_{\mathit{lhs}}(#1)}
\newcommand{\vcsubset}[1]{\mathit{VC}_{\supseteq}(#1)}
\begin{document}

%
\title{Certification of Confluence Proofs using \ceta\thanks{Supported by the
Austrian Science Fund (FWF), projects P22467 and P22767.}}
\titlerunning{Certification of Confluence Proofs using \ceta}

\author{Julian Nagele \and Ren\'{e} Thiemann}
\authorrunning{Julian Nagele and Ren\'{e} Thiemann}

\institute{
University of Innsbruck, Austria, 
\email{\{julian.nagele|rene.thiemann\}@uibk.ac.at}
}

\maketitle

\section{Introduction}
\label{sect:introduction}
\ceta was originally developed as a tool for certifying termination proofs \cite{TS09b}, which have to 
be provided as certificates in the CPF-format. Given a certificate \ceta will either answer \texttt{CERTIFIED}, 
or return a detailed error message why the proof was \texttt{REJECTED}.
Its correctness is formally proven as part of \isafor, the \textbf{Isa}belle \textbf{F}ormalization \textbf{o}f 
\textbf{R}ewriting: \isafor contains executable ``check''-functions for each formalized proof technique together with
formal proofs that whenever such a check is accepted, the technique is applied correctly. 
Isabelle's code-generator is then used to obtain \ceta.\footnote{At \url{http://cl-informatik.uibk.ac.at/software/ceta/} 
one can access \ceta, \isafor, and the CPF-format.}
By now, \ceta can also be used for certifying confluence and non-confluence proofs. In this system description, we give
an overview on what kind of proofs are supported, and what information has to be given in
the certificates. As we will see, only little information is required and so
we hope that \CSI \cite{ZFM11b} will not stay the only confluence tool that can
produce certificates. 

\section{Terminating Term Rewrite Systems (TRSs)}
\label{sect:terminating}
It is well known that confluence of terminating TRSs is decidable by checking
joinability of all critical pairs. The latter can be decided by reducing both terms of
a critical pair to arbitrary normal forms and then checking if these are equal.
This technique is also supported in \ceta, where in the certificate one just has to 
provide the termination
proof and \ceta automatically constructs all critical pairs and checks their joinability by
rewriting to normal forms. 
Alternatively one can also specify to check joinability by an automatic breadth-first search. 
Finally one can completely provide the joining sequences for all critical pairs in the certificate. 
Although the latter results in more verbose certificates, which are harder to produce,
they are faster to check as no search is required for certification.
For example, for $\RR = \RR_{\m{ack}} \cup $ $\{\m f(x) \to
  x, \m a \to \m{ack}(1000,1000), \m a \to \m f(\m{ack}(1000,1000))\}$, where $\RR_{\m{ack}}$
  is a convergent TRS for the Ackermann-function, all critical pairs are joinable, 
  but rewriting to normal form won't work.
  
  
\section{Certificates for Confluence}
\isafor contains formalizations of two techniques
that ensure confluence and do not demand termination: strongly closed and linear TRSs as well as weakly orthogonal TRSs 
are confluent.

For the latter, the certificate only consists of the statement that the TRS is weakly
orthogonal, which is a syntactic criterion that can easily be checked by \ceta. 
For the former criterion, the interesting part is to ensure that a given TRS $\RR$ is strongly closed,
i.e., for every critical pair $(s,t)$ there are terms $u$ and $v$ such that 
$s \to^*_\RR u \fromBT{=}{\RR} t$ and $s \to^=_\RR v \fromBT{*}{\RR} t$. Clearly,
rewriting to normal forms is of little use here, so we just offer a breadth-first search
in \ceta. In the certificate one just has to provide a bound on the length of the joining
derivations. The reason for requiring the explicit bound is that
in Isabelle all functions have to be total.\pagebreak In contrast to \rSC{terminating},
here $\RR$ is not necessarily terminating,
and thus an unbounded breadth-first search might be non-terminating, whereas an explicit bound on the
depth easily ensures totality.

At this point, let us recall our notions of TRSs and critical pairs:
as usual a TRS $\RR$ is just a set of rewrite rules. However we do not assume
the following standard variable conditions:
\begin{xalignat*}{2}
\vclhs\RR & = \VCLHS\RR &
\vcsubset\RR & = \VCSUBSET\RR
\end{xalignat*}
The critical pairs of a TRS $\RR$ are defined as
\[
\CP(\RR) = \{ (r\sigma,C[r']\sigma) \mid \ell \to r \in \RR,\, \ell' \to r' \in \RR,\, \ell = C[u],\, u \notin \VV,\, \mathit{mgu}(u,\ell') = \sigma\}
\]
where it is assumed that the variables in $\ell \to r$ and $\ell' \to r'$ have been renamed
apart. We do not exclude root overlaps of a rule with itself, which gives
rise to trivial critical pairs of the form $(r\sigma,r\sigma)$. Therefore, most
techniques in \isafor that rely on critical pairs immediately try to remove all trivial 
critical pairs, i.e., they consider $\{(s,t) \in \CP(\RR) \mid s \neq t\}$
instead of $\CP(\RR)$. So, in practice these additional critical pairs do not play a role. 
However, for TRSs
that do not satisfy the variable conditions they are essential. For example,
for the TRS $\RR_1 = \{\m a \to y\}$ over signature $\{\m a,\m b,\m c\}$ we have $\CP(\RR) = \{(x,y)\}$, whereas without root-overlaps
with the same rule there would be no critical pair and we might wrongly conclude confluence
via orthogonality.

The confluence criterion of weak orthogonality not only implicitly demands
$\vcsubset\RR$, but explicitly demands $\vclhs\RR$.
In contrast, none of the variable conditions is required for strongly closed and linear TRSs.
Hence, the following two TRSs are confluent via this criterion: $\RR_2 = \{x \to \m f(x), y \to \m g(y)\}$ is strongly closed as there are no critical pairs, and $\RR_3 = \{\m a \to \m f(x), \m f(x) \to \m b\}$ is strongly closed as the
only non-trivial critical pair is $(\m f(x), \m f(y))$, which is obviously joinable in one step
to $\m b$. Also $\RR_4 = \{\m a \to \m f(x), \m f(x) \to \m b, x \to \m f(\m g(x))\}$---which 
satisfies neither of the variable conditions---is strongly closed
and linear, and thus confluent. 
Similarly as for weak orthogonality, the addition of root overlaps w.r.t.\ the
same rule is essential, as otherwise the non-confluent and linear TRS $\RR_1$ would be
strongly closed.

\section{Disproving Confluence via Non-Joinable Forks}
\label{sect:disprove}

One way to disprove confluence of an arbitrary, possibly non-terminating TRS $\RR$ is to
provide a non-joinable fork, i.e., $s \to_\RR^* t_1$ and $s \to_\RR^* t_2$ such that $t_1$ and
$t_2$ have no common reduct. To certify these proofs, in \ceta we demand the concrete derivations
from $s$ to $t_1$ and $t_2$ and additionally a certificate that $t_1$ and $t_2$ are not joinable, which is clearly
the more interesting part. To this end, we generalize the notion of non-joinability to two TRSs,
which allows us to conveniently and modularly formalize several existing
techniques for non-joinability. Initially, $\RR_1 = \RR_2 = \RR$ and any change on one of the TRSs
is currently internally computed by \ceta.
\[
\nj{\RR_1}{\RR_2}{t_1}{t_2} = (\neg \exists u.\, t_1 \to_{\RR_1}^* u \wedge t_2 \to_{\RR_2}^* u)
\]
%


\subsection{Grounding}
Clearly, $\nj{\RR_1}{\RR_2}{t_1\sigma}{t_2\sigma}$ implies $\nj{\RR_1}{\RR_2}{t_1}{t_2}$ for
some arbitrary substitution $\sigma$. This substitution has to be provided in the certificate
and can be used replace each variable in $t_1$ and $t_2$ by some fresh constant.
Grounding can be beneficial for other non-joinability techniques.

\subsection{Tcap and Unification}
The function $\tcap\RR$ can approximate an upper part of a term where no rewriting with $\RR$
is possible, and thus, remains unchanged by rewriting. Hence, \pagebreak
it suffices to check that $\tcap{\RR_1}(t_1)$ is not unifiable with $\tcap{\RR_2}(t_2)$
to ensure $\nj{\RR_1}{\RR_2}{t_1}{t_2}$.
  
Since $\tcap{\RR_i}$ replaces variables by fresh ones, it is beneficial to
apply grounding beforehand \cite{ZFM11b}. To this end, \ceta computes a suitable 
grounding substitution, if some $t_i$ is not a ground term. Because of  grounding, 
this criterion fully subsumes the criterion, that two different normal forms are not joinable.
Nevertheless one can also refer to the latter criterion in certificates.

\subsection{Usable Rules for Reachability}
In \cite{Aoto} the usable rules for reachability $\Ur$ have been defined (via some 
inductive definition of auxiliary usable rules $\Uz$). They have
the crucial property that $t \to_\RR^* s$ implies $t \to_{\Ur(\RR,t)}^* s$.
This property immediately shows the following theorem.
\begin{theorem}
\label{usable}
$\nj{\Ur(\RR_1,t_1)}{\Ur(\RR_2,t_2)}{t_1}{t_2}$ implies
$\nj{\RR_1}{\RR_2}{t_1}{t_2}$.
\end{theorem}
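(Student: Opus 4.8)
The plan is to prove the contrapositive. Unfolding the definition of $\nj{\RR_1}{\RR_2}{t_1}{t_2}$, it suffices to show: whenever $t_1$ and $t_2$ admit a common reduct with respect to $\RR_1$ and $\RR_2$, they also admit a common reduct with respect to $\Ur(\RR_1,t_1)$ and $\Ur(\RR_2,t_2)$. So I would start by assuming some term $u$ with $t_1 \to_{\RR_1}^* u$ and $t_2 \to_{\RR_2}^* u$.

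The second step is to invoke the crucial property of usable rules for reachability stated just above the theorem, namely that $t \to_\RR^* s$ implies $t \to_{\Ur(\RR,t)}^* s$, instantiated once with $\RR := \RR_1$, $t := t_1$, $s := u$ and once with $\RR := \RR_2$, $t := t_2$, $s := u$. This immediately yields $t_1 \to_{\Ur(\RR_1,t_1)}^* u$ and $t_2 \to_{\Ur(\RR_2,t_2)}^* u$, so $u$ is a common reduct witnessing $\neg\,\nj{\Ur(\RR_1,t_1)}{\Ur(\RR_2,t_2)}{t_1}{t_2}$. Discharging the assumption concludes the contrapositive, and hence the theorem.

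Since the crucial property is already available as part of the formalization of $\Ur$ following \cite{Aoto}, the argument itself is essentially immediate; there is no real obstacle at the level of this proof — the content has been packaged into that reachability lemma, whose own formalization (together with the inductive definition of the auxiliary $\Uz$ and the fact that $\Ur(\RR,t)$ is computable) is where the work resides. The one subtlety worth keeping in mind is that $\Ur$ must be instantiated with exactly the respective start terms: using $\Ur(\RR,t_1)$ in place of $\Ur(\RR,t_2)$ on the second reduction would break the argument. Note also that the reverse implication is trivial by monotonicity, as $\Ur(\RR_i,t_i) \subseteq \RR_i$, but it is precisely the stated direction — the one that does \emph{not} follow from monotonicity alone — that we need. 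Finally, on the certification side this step requires no extra data in the certificate: \ceta computes $\Ur(\RR_1,t_1)$ and $\Ur(\RR_2,t_2)$ itself and then continues the remaining non-joinability checks on these smaller systems.
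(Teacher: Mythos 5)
Your proof is correct and matches the paper's argument: the paper states that the crucial property $t \to_\RR^* s$ implies $t \to_{\Ur(\RR,t)}^* s$ ``immediately shows'' the theorem, which is exactly your contrapositive instantiation of that property for both $(\RR_1,t_1)$ and $(\RR_2,t_2)$ on a common reduct $u$. No gaps.
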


Whereas the crucial property was easily formalized within \isafor following the 
original proof, it was actually more complicated to provide an implementation of usable rules
that turns the inductive definition of $\Uz$ into executable code. 
%
Note that we did not have this problem in previous work on
usable rules \cite{CSRT-RTA10} where we explicitly demand that the set of usable rules is provided in the certificate. However, due to our implementation of usable rules, we no longer require the
set of usable rules in the certificate.

\subsection{Discrimination Pairs}
In \cite{Aoto} term orders are utilized to prove non-joinability.
To be precise, $(\succsim,\succ)$ is a discrimination pair iff $\succsim$ is a rewrite order,
$\succ$ is irreflexive, and ${\succsim} \circ {\succ} \subseteq {\succ}$.\footnote{Note, that
unlike what is said in \cite{Aoto}, one 
does not require ${\succ} \circ {\succsim} \subseteq {\succ}$.}
We formalized the following theorem, which in combination with \rTH{usable}
completely simulates \cite[Theorem 12]{Aoto}.
\begin{theorem}

\label{discrimination pairs}
If $(\succsim,\succ)$ is a discrimination pair, ${\RR_1^{-1} \cup \RR_2} \subseteq {\succsim}$, 
and $t_1 \succ t_2$ then $\nj{\RR_1}{\RR_2}{t_1}{t_2}$.
\end{theorem}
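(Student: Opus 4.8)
The plan is to argue by contradiction, exploiting the defining properties of a discrimination pair. Suppose that $\nj{\RR_1}{\RR_2}{t_1}{t_2}$ fails, i.e.\ there is a common reduct $u$ with $t_1 \to_{\RR_1}^* u$ and $t_2 \to_{\RR_2}^* u$. First I would observe that since $\RR_1^{-1} \subseteq {\succsim}$, every step $a \to_{\RR_1} b$ satisfies $b \succsim a$, and because $\succsim$ is a rewrite order (hence closed under contexts and substitutions and transitive), this lifts to $u \succsim t_1$ along the derivation $t_1 \to_{\RR_1}^* u$. Dually, from $\RR_2 \subseteq {\succsim}$ and the derivation $t_2 \to_{\RR_2}^* u$ we get $t_2 \succsim u$. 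Combining, $t_2 \succsim u \succsim t_1$, so $t_2 \succsim t_1$ by transitivity of $\succsim$.

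Now I would feed this into the hypothesis $t_1 \succ t_2$. From $t_2 \succsim t_1$ and $t_1 \succ t_2$, the compatibility condition ${\succsim} \circ {\succ} \subseteq {\succ}$ yields $t_2 \succ t_2$, contradicting irreflexivity of $\succ$. Hence no common reduct can exist, which is exactly $\nj{\RR_1}{\RR_2}{t_1}{t_2}$. I should be slightly careful with the empty-derivation case: if $t_1 = u$ then $u \succsim t_1$ needs the reflexivity of $\succsim$, which is part of it being a (quasi-)rewrite order; the same remark applies on the $\RR_2$ side. (The footnote in the excerpt makes exactly the point that only ${\succsim} \circ {\succ} \subseteq {\succ}$, not the reversed composition, is needed — and indeed the argument above uses precisely that orientation, since we derived $t_2 \succsim t_1$ and then prepended it to $t_1 \succ t_2$.)

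The only genuine subtlety — the ``main obstacle'' — is getting the direction of the inclusions right: the first TRS enters inverted ($\RR_1^{-1} \subseteq {\succsim}$), so a $\RR_1$-step \emph{increases} the term in $\succsim$, whereas a $\RR_2$-step also increases it, so that the common reduct $u$ sits \emph{above} both $t_1$ and $t_2$. It is then the single comparison $t_2 \succsim t_1$ versus the strict $t_1 \succ t_2$ that produces the clash. Everything else is a routine induction on derivation length using that $\succsim$ is a rewrite order, so I would not spell those inductions out in detail.
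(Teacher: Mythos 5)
Your proposal is correct and follows essentially the same route as the paper: contradiction via a common reduct, translating the $\RR_1^{-1} \cup \RR_2$ steps into $\succsim$, and then using ${\succsim} \circ {\succ} \subseteq {\succ}$ with irreflexivity of $\succ$. The only cosmetic difference is that you collapse the chain to a single $t_2 \succsim t_1$ via transitivity (and reflexivity for empty derivations), whereas the paper keeps $t_2 \succsim^* t_1 \succ t_2$ and applies the compatibility condition iteratively.
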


\begin{proof}
We perform a proof by contradiction, so 
assume $t_1 \to_{\RR_1}^* u$ and $t_2 \to_{\RR_2}^* u$ and hence $t_2 \to_{\RR_1^{-1} \cup \RR_2}^* t_1$.
Then by the preconditions
we obtain $t_2 \succsim^* t_1 \succ t_2$.
Iteratively applying ${\succsim} \circ {\succ} \subseteq {\succ}$ yields $t_2 \succ t_2$ 
in contradiction to irreflexivity of $\succ$.
\end{proof}

We have also proven within \isafor that every reduction pair is a discrimination pair, and
thus one can use all reduction pairs that are available in \ceta in the certificate.

\subsection{Argument Filters}
In \cite{Aoto} it is shown that argument filters $\pi$ are useful for non-confluence
proofs. The essence is

\begin{observation}
\label{argument filter}
$\nj{\pi(\RR_1)}{\pi(\RR_2)}{\pi(t_1)}{\pi(t_2)}$ 
implies $\nj{\RR_1}{\RR_2}{t_1}{t_2}$.
\end{observation}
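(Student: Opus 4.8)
The plan is to prove the contrapositive, exactly in the spirit of the proof of \rTH{usable}: joinability is preserved by argument filtering, hence non-joinability after filtering implies non-joinability before. So I would assume that $\nj{\RR_1}{\RR_2}{t_1}{t_2}$ fails, i.e.\ that there is a term $u$ with $t_1 \to_{\RR_1}^* u$ and $t_2 \to_{\RR_2}^* u$, and derive a contradiction with the hypothesis $\nj{\pi(\RR_1)}{\pi(\RR_2)}{\pi(t_1)}{\pi(t_2)}$.

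The key ingredient is the standard compatibility of argument filters with rewriting, namely that whenever $s \to_\RR s'$ then $\pi(s) \to_{\pi(\RR)}^= \pi(s')$, where $\pi(\RR) = \{\pi(\ell) \to \pi(r) \mid \ell \to r \in \RR\}$; the reflexive closure $\to^=$ is needed because the filter may erase the argument position at which the step took place. Iterating this over a derivation gives $s \to_\RR^* s' \implies \pi(s) \to_{\pi(\RR)}^* \pi(s')$. Applying this to the two derivations above yields $\pi(t_1) \to_{\pi(\RR_1)}^* \pi(u)$ and $\pi(t_2) \to_{\pi(\RR_2)}^* \pi(u)$, so $\pi(u)$ is a common reduct of $\pi(t_1)$ and $\pi(t_2)$, contradicting the assumption. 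The argument for the $\RR^{-1}$ orientation used elsewhere is not needed here, since $\nj{\cdot}{\cdot}{\cdot}{\cdot}$ is phrased directly in terms of forward reductions.

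The main obstacle is the single-step lemma inside \isafor. One has to introduce the filtered substitution $\pi(\sigma)$ with $\pi(\sigma)(x) = \pi(\sigma(x))$, establish $\pi(t\sigma) = \pi(t)\,\pi(\sigma)$ by induction on $t$, and then do a case analysis on the rewrite position $p$: if $p$ lies below an argument that $\pi$ discards, the two filtered terms coincide and the step collapses to zero steps; otherwise the filtered context $\pi(C)$ still exhibits a $\pi(\RR)$-redex at the image position and exactly one genuine step of $\pi(\RR)$ remains. Handling collapsing filters (those mapping some $f(\dots)$ to one of its arguments) and the attendant position bookkeeping is the fiddly part; once that lemma is in place, closing under reflexive-transitive closure and the contradiction argument are immediate, which is presumably why the authors state this as a mere \emph{Observation}.
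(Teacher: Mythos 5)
Your proposal is correct and is exactly the intended argument: the paper states this as an unproved Observation, relying on the standard simulation lemma that $s \to_\RR s'$ implies $\pi(s) \to^=_{\pi(\RR)} \pi(s')$ (with the reflexive closure absorbing steps at erased positions), so any common reduct $u$ of $t_1,t_2$ yields the common reduct $\pi(u)$ of the filtered terms, contradicting the hypothesis. This mirrors the contrapositive structure the paper uses for \rTH{usable} and \rTH{discrimination pairs}, and your remarks on the substitution lemma $\pi(t\sigma)=\pi(t)\,\pi(\sigma)$ and the case analysis on the rewrite position are precisely the details the formalization would need.
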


Consequently, one may show non-joinability by applying an argument filter
and then continue on the filtered problem.
At this point we can completely simulate \cite[Theorem 14]{Aoto}:
apply usable rules, apply argument filter, apply usable rules, apply discrimination pair.

\subsection{Interpretations}
Let $\FF$ be some signature. Let $\AAA$ be a weakly monotone $\FF$-algebra $(A,(f^\AAA)_{f \in \FF},\geq)$, i.e.,
$f^\AAA : A^n \to A$ for each $n$-ary symbol $f \in \FF$, $\geq$ is a partial order, \pagebreak
and for all $a,b,f$, $a \geq b$ implies $f^\AAA(\dots,a,\dots) \geq f^\AAA(\dots,b,\dots)$.
$\AAA$ is a quasi-model for $\RR$ iff $\sem{\ell} \geq \sem{r}$
for all $\ell \to r \in \RR$ and every valuation $\alpha : \VV \to A$.
Let $\alpha_d$ be some default valuation.

\begin{theorem}
\label{quasi}
If $\AAA$ is a quasi-model of $\RR_1^{-1} \cup \RR_2$ and $\semz{t_2} \not\geq \semz{t_1}$ then
$\nj{\RR_1}{\RR_2}{t_1}{t_2}$.
\end{theorem}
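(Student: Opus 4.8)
The plan is to mimic the proof of \rTH{discrimination pairs}, replacing the strict/non-strict order pair by the algebra's partial order together with the weak compatibility coming from the quasi-model condition. I would argue by contradiction: suppose $t_1$ and $t_2$ are joinable, say $t_1 \to_{\RR_1}^* u$ and $t_2 \to_{\RR_2}^* u$, and derive a contradiction with $\semz{t_2} \not\geq \semz{t_1}$.

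First I would record the basic semantic fact that a quasi-model gives a \emph{weakly compatible} interpretation: if $\AAA$ is a quasi-model of a TRS $\SSS$, then $s \to_\SSS s'$ implies $\sem{s} \geq \sem{s'}$ for every valuation $\alpha$, and hence $s \to_\SSS^* s'$ implies $\sem{s} \geq \sem{s'}$. This is the standard induction on the rewrite derivation: the base case is the rule instance $\ell\sigma \to r\sigma$, for which $\sem{\ell\sigma} \geq \sem{r\sigma}$ follows from $\sem{\ell} \geq \sem{r}$ under the valuation $\alpha \circ \sigma$; the closure under contexts uses weak monotonicity of the $f^\AAA$; and the closure under $\to^*$ uses reflexivity and transitivity of $\geq$. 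Since all of this is monotone reasoning under the \emph{fixed} default valuation $\alpha_d$, I can specialize it to $\alpha = \alpha_d$ throughout.

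Next, from $t_1 \to_{\RR_1}^* u$ I get, applying the above with $\SSS = \RR_1$, that $\semz{t_1} \geq \semz{u}$. From $t_2 \to_{\RR_2}^* u$ I similarly get $\semz{t_2} \geq \semz{u}$; but I actually want to compare $t_2$ with $t_1$. The cleaner route, exactly paralleling the discrimination-pair proof, is to observe $u \to_{\RR_1^{-1}}^* t_1$ (reverse the first derivation), so $t_2 \to_{\RR_1^{-1} \cup \RR_2}^* t_1$, and then apply the weak-compatibility fact to the single quasi-model of $\RR_1^{-1} \cup \RR_2$ to conclude $\semz{t_2} \geq \semz{t_1}$. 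This contradicts the hypothesis $\semz{t_2} \not\geq \semz{t_1}$, completing the proof.

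The main obstacle is a bookkeeping one rather than a conceptual one: making sure the direction of the inequality matches the hypothesis. Because the quasi-model is for $\RR_1^{-1} \cup \RR_2$, a $\RR_1$-step $a \to b$ corresponds to a $\RR_1^{-1}$-step $b \to a$ and hence gives $\semz{b} \geq \semz{a}$, i.e.\ the inequality points ``backwards'' along the original $\RR_1$-derivation — which is precisely why one compares $t_2$ against $t_1$ (through $u$) and arrives at $\semz{t_2} \geq \semz{t_1}$. The only other point needing a little care is that the default valuation $\alpha_d$ suffices: this is fine because weak monotonicity and the quasi-model inequalities hold for \emph{every} valuation, so in particular under the composition $\alpha_d \circ \sigma$ needed for rule instances, and the context-closure step never changes the valuation. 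No strictness and no well-foundedness is used anywhere — a partial order plus reflexivity/transitivity is all that the argument needs.
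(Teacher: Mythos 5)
Your proposal is correct and follows exactly the paper's argument: assume joinability, reverse the $\RR_1$-derivation to obtain $t_2 \to_{\RR_1^{-1} \cup \RR_2}^* t_1$, apply weak compatibility of the quasi-model to get $\semz{t_2} \geq \semz{t_1}$, and contradict the hypothesis. The paper merely states this in three lines, deferring the weak-compatibility lemma and the $\alpha_d$ bookkeeping that you spell out explicitly.
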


\begin{proof}
Similar as for \rTH{discrimination pairs}. Given $t_2 \to_{\RR_1^{-1} \cup \RR_2}^* t_1$
and the quasi-model condition we conclude $\semz{t_2} \geq \semz{t_1}$. 
This is an immediate contradiction to 
$\semz{t_2} \not\geq \semz{t_1}$.
\end{proof}

This proof was easy to formalize as it could reuse the formalization of semantic labeling~\cite{CSRT-RTA11},
which also includes algorithms to check the quasi-model conditions as well as 
a format for models in the certificate.
Here, \ceta is currently restricted to algebras over finite domains.
Moreover, the valuation $\alpha_d$ cannot be specified in the certificate. However,
by previously applying grounding, the choice of $\alpha_d$ does not matter any longer.

Note that in contrast to \cite[Theorem 10]{Aoto}, we only require $\semz{t_2} \not\geq \semz{t_1}$
instead of $\semz{t_2} \not\geq \semz{t_1} \wedge \semz{t_1} \geq \semz{t_2}$.
This has an immediate advantage, namely that we can derive \cite[Corollary 6]{Aoto} as a consequence:
instantiate $\geq$ by equality, then weak monotonicity is always guaranteed, the quasi-model condition
becomes a model condition, and $\semz{t_2} \not\geq \semz{t_1}$ is equivalent to
$\semz{t_1} \neq \semz{t_2}$. Moreover, the usable rules can easily be integrated as a preprocessing
step in the same way as we did for discrimination pairs.

Further note that \cite[Corollary 6]{Aoto} can also simulate \cite[Theorem 5]{Aoto}, by just
taking the quotient algebra. Therefore, by Theorems~\ref{usable}, \ref{discrimination pairs}, and 
\ref{quasi}, and \rObs{argument filter}
we can now simulate all non-joinability criteria of \cite{Aoto} and \ceta can also certify all 
example proofs of \cite{Aoto}.

\subsection{Tree Automata}

A bottom-up tree automaton $\AAA$ is a quadruple $(\QQ,\FF,\Delta,\final)$ 
with states $\QQ$, signature $\FF$,
transitions $\Delta$, and final states $\final$, and $\LL(\AAA) \subseteq \TT(\FF)$ 
denotes the accepted regular tree language. We say that $\AAA$ is closed under $\RR$ if
$\{t \mid s \in \LL(\AAA), s \to_\RR t\} \subseteq \LL(\AAA)$.

\begin{observation}
\label{automata}
Let $\AAA_1$ and $\AAA_2$ be tree automata.
If $t_i \in \LL(\AAA_i)$ and $\AAA_i$ is closed under $\RR_i$ for $i = 1,2$, and
$\LL(\AAA_1) \cap \LL(\AAA_2) = \varnothing$ then $\nj{\RR_1}{\RR_2}{t_1}{t_2}$.
\end{observation}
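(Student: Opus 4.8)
The plan is to prove the contrapositive by tracking how a common reduct would have to live in both tree languages. Suppose $\nj{\RR_1}{\RR_2}{t_1}{t_2}$ fails, i.e., there is a term $u$ with $t_1 \to_{\RR_1}^* u$ and $t_2 \to_{\RR_2}^* u$. Since $t_1 \in \LL(\AAA_1)$ and $\AAA_1$ is closed under $\RR_1$, a straightforward induction on the length of the rewrite sequence $t_1 \to_{\RR_1}^* u$ shows $u \in \LL(\AAA_1)$: the base case is the assumption $t_1 \in \LL(\AAA_1)$, and each single step $s \to_{\RR_1} s'$ with $s \in \LL(\AAA_1)$ yields $s' \in \LL(\AAA_1)$ directly from the closure condition $\{t \mid s \in \LL(\AAA_1),\, s \to_{\RR_1} t\} \subseteq \LL(\AAA_1)$. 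Symmetrically, from $t_2 \in \LL(\AAA_2)$ and closure of $\AAA_2$ under $\RR_2$ we obtain $u \in \LL(\AAA_2)$.

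Combining the two conclusions gives $u \in \LL(\AAA_1) \cap \LL(\AAA_2)$, so the intersection is non-empty, contradicting the hypothesis $\LL(\AAA_1) \cap \LL(\AAA_2) = \varnothing$. Hence no such $u$ exists and $\nj{\RR_1}{\RR_2}{t_1}{t_2}$ holds.

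There is no real obstacle here; the argument is a one-line diagram chase once the closure property is phrased as ``membership is preserved under rewriting''. The only point that deserves a little care in a formalized setting is the induction that lifts single-step closure to closure under $\to_{\RR_i}^*$, but this is routine transitive-closure induction. What genuinely requires work — and lies outside the scope of this observation — is providing the tree automata $\AAA_1$, $\AAA_2$ together with certificates that they are closed under the respective TRSs and that their languages are disjoint; the latter amounts to checking emptiness of a product automaton, which \ceta can decide. The observation itself merely packages these ingredients into a non-joinability conclusion.
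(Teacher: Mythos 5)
Your proof is correct and is exactly the intended argument: the paper states this as an \emph{observation} without proof precisely because it follows by the routine induction you give, lifting the single-step closure of $\LL(\AAA_i)$ under $\RR_i$ to $\to_{\RR_i}^*$ and then contradicting disjointness with the common reduct $u$. Your closing remarks about where the real work lies (certifying closure and emptiness of the intersection) also match the paper's discussion following the observation.
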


For checking these non-joinability certificates, \ceta implemented standard tree automata algorithms
for membership, intersection, and emptiness. The most difficult part is checking whether
$\AAA$ is closed under $\RR$ for some $\AAA$ and $\RR$.
Here, \ceta provides three alternatives. One can refer to Genet's criterion 
of compatibility, or use the more liberal condition of state-compatibility \cite{BFRT-LATA14}, which 
requires an additional compatibility relation in the certificate, or one can just refer to
the decision procedure \cite{BFRT-LATA14}, which currently requires a deterministic automaton as input.
Since all of the conditions have been formalized under the condition $\vcsubset{\RR}$, 
\rObs{automata} can only be applied if both TRSs satisfy this variable condition.
Moreover, grounding is an essential preprocessing step, since tree automata only accept ground
terms.

\begin{example}
\label{aut ex}
Let $\RR_5 = \{\m a \to \m b_1, \m a \to \m b_2, x \to \m f(x)\}$.
Non-confluence can easily be shown since the critical pair $(\m b_1, \m b_2)$ is not joinable:
Take the automata $\AAA_i = (\{1\}, \FF, \{\m f(1) \to 1, \m b_i \to 1\}, \{1\})$, which
satisfy all conditions of \rObs{automata}. 
\end{example}

\section{Modularity of Confluence}

In \cite{Toyama87} it was proven that confluence is a modular property
for disjoint unions of TRSs. Whereas a certificate for applying this proof technique
is trivial by just providing the decomposition, we cannot certify these proofs, since
currently a formalization of this modularity result is missing.

However, we at least formalized the easy direction of the modularity theorem that
non-confluence of one of the TRSs implies non-confluence of the disjoint union,
and we can thus certify non-confluence proofs in a modular way. We base our 
certifier on the following theorem.
Here, we assume an infinite set of symbols\footnote{Note that in \isafor function symbols do not come with a fixed arity.} 
and finite signatures $\FF(\RR)$ and $\FF(\SSS)$ of the TRSs. 

\begin{theorem}
\label{mod}
Let $\FF(\RR) \cap \FF(\SSS) = \varnothing$, let $\vcsubset\RR$, let $\vclhs\SSS$.
Then $\nCR(\RR)$ implies $\nCR(\RR \cup \SSS)$.
\end{theorem}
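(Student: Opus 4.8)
The plan is to prove the contrapositive in the form of a \emph{lifting/projection} argument: from a non-joinable fork for $\RR$ we construct a non-joinable fork for $\RR \cup \SSS$, using the disjointness of signatures to project $(\RR\cup\SSS)$-derivations back to $\RR$-derivations. Concretely, suppose $\RR$ is not confluent, witnessed by a fork $t \to_\RR^* t_1$ and $t \to_\RR^* t_2$ with $t_1,t_2$ having no common $\RR$-reduct. Since $\RR \subseteq \RR \cup \SSS$, the same fork is an $(\RR\cup\SSS)$-fork from $t$. It therefore suffices to show that $t_1$ and $t_2$ have no common $(\RR\cup\SSS)$-reduct, i.e.\ $\nj{\RR\cup\SSS}{\RR\cup\SSS}{t_1}{t_2}$, which by the earlier development amounts to disproving confluence of $\RR\cup\SSS$.

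The core lemma I would isolate is: if $s$ is a term over $\FF(\RR)$ only (no symbol of $\FF(\SSS)$, and in particular $t, t_1, t_2$ are such terms) and $s \to_{\RR\cup\SSS}^* s'$, then in fact $s \to_\RR^* s'$ and $s'$ is again over $\FF(\RR)$. First I would show this for a single step: a rewrite step $s \to_{\RR\cup\SSS} s'$ at position $p$ using $\ell \to r$. If the rule is from $\RR$, the redex $s|_p$ is an instance $\ell\sigma$; since $s$ contains no $\SSS$-symbols, neither does $s|_p$, and the substituted variables are instantiated by subterms of $s$, hence again $\FF(\RR)$-terms — here $\vcsubset\RR$ is exactly what guarantees $r\sigma$ introduces no new variables, so $s'$ stays over $\FF(\RR)$, and the step is an $\RR$-step. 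If the rule is from $\SSS$, then $\ell$ contains a function symbol (this is where $\vclhs\SSS$ enters: $\ell \notin \VV$, so $\ell$ has a root symbol in $\FF(\SSS)$), and by disjointness that symbol cannot occur in $s$; so no $\SSS$-step is possible on an $\FF(\RR)$-term at all. Iterating the single-step claim along the derivation gives the lemma.

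Given the core lemma, the argument closes quickly: if $t_1$ and $t_2$ had a common $(\RR\cup\SSS)$-reduct $u$, then since $t_1,t_2$ are $\FF(\RR)$-terms the lemma turns $t_1 \to_{\RR\cup\SSS}^* u$ and $t_2 \to_{\RR\cup\SSS}^* u$ into $t_1 \to_\RR^* u$ and $t_2 \to_\RR^* u$, contradicting non-joinability of the original $\RR$-fork. Hence $\RR\cup\SSS$ is not confluent. The main obstacle I expect is not the mathematics but making the single-step case airtight in the presence of the missing variable conditions: one must track carefully that an $\SSS$-rule whose left-hand side is a variable is excluded precisely by $\vclhs\SSS$, and that an $\RR$-rule creating fresh right-hand-side variables is excluded by $\vcsubset\RR$ — without these two hypotheses the projection fails, and indeed the statement lists exactly these side conditions. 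A secondary care point is the bookkeeping that substitutions arising in $\FF(\RR)$-contexts only bind variables to $\FF(\RR)$-terms, which is an easy structural induction on terms but needs to be stated explicitly for the formalization.
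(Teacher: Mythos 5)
Your overall strategy coincides with the paper's: keep the non-joinable $\RR$-fork as an $(\RR\cup\SSS)$-fork and show that any $(\RR\cup\SSS)$-join of $t_1$ and $t_2$ projects back to an $\RR$-join, using $\vclhs\SSS$ to rule out $\SSS$-steps on terms free of $\FF(\SSS)$-symbols and $\vcsubset\RR$ to preserve that freeness along $\RR$-steps. Your single-step case analysis is correct and isolates exactly the two places where the two variable conditions are needed, matching the paper's reasoning.

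The genuine gap is the parenthetical claim that ``in particular $t, t_1, t_2$ are such terms,'' i.e.\ contain no symbol of $\FF(\SSS)$. Nothing in the hypothesis $\nCR(\RR)$ guarantees this: a non-joinable $\RR$-fork may live on terms mentioning symbols of $\FF(\SSS)$. For instance, with $\RR = \{\m a \to \m b,\, \m a \to \m c\}$ and $\SSS = \{\m g(x) \to x\}$ (all hypotheses of \rTH{mod} hold), the peak $\m g(\m b) \fromBT{}{\RR} \m g(\m a) \to_\RR \m g(\m c)$ is a perfectly valid witness of $\nCR(\RR)$, yet $\m g \in \FF(\SSS)$ occurs in all three terms, so your core lemma is not applicable to it and the $\SSS$-rule \emph{can} fire on $t_1$ and $t_2$. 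The paper closes exactly this hole with its ``w.l.o.g.''\ step: rename every symbol in $\FF(s) \cap \FF(\SSS)$ of the chosen peak into a fresh one, which preserves the $\RR$-derivations and the non-joinability because such symbols are inert for $\RR$; this is the one place where the standing assumptions of finite signatures and an infinite set of function symbols are used, as the paper's footnote emphasizes. Your proof needs this renaming lemma (or some other argument that a witness avoiding $\FF(\SSS)$ can always be chosen) before the projection lemma can be invoked; note also that after renaming the terms need not be over $\FF(\RR)$ only, so your core lemma should be stated with the invariant ``contains no symbol of $\FF(\SSS)$'' rather than ``is a term over $\FF(\RR)$.'' With these repairs the rest of your argument goes through as written.
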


\begin{proof}
By assuming $\nCR(\RR)$ there are $s,t,u$ such that $s \to_\RR^* t$, $s \to_\RR^* u$,
and $\njo\RR tu$. Since $\FF(\RR) \cap \FF(\SSS) = \varnothing$, w.l.o.g.\ we assume 
$\FF(s) \cap \FF(\SSS) = \varnothing$.\footnote{Here is exactly the
point where in the formalization we use the assumptions of finite signatures and an infinite
set of symbols. Then it is always possible to rename all symbols in $\FF(s) \cap \FF(\SSS)$
into fresh ones.} By $\vcsubset\RR$ we conclude that also $(\FF(t) \cup \FF(u))
\cap \FF(\SSS) = \varnothing$ must hold. Assume that $t$ and $u$ are joinable by
$\RR \cup \SSS$. By looking at the function symbols and using $\vclhs\SSS$ we conclude
that the joining sequences cannot use any rule from $\SSS$. Hence, $t$ and $u$ are 
joinable by $\RR$, a contradiction to $\njo\RR tu$. 
\end{proof}

There is an asymmetry in the modularity theorem, namely that $\RR$ and $\SSS$
have to satisfy different variable conditions. Note that in general it is not possible
to weaken these conditions as can be seen by the following two examples of \cite[Example 20 and 
example in Section 5.3]{IJCAR}. 
If $\RR = \{\m a \to \m b, \m a \to \m c\}$ and $\SSS = \{x \to \m d\}$ 
(or if $\RR = \{\m f(x,y) \to \m f(z,z), \m f(\m b, \m c) \to \m a, \m b \to \m d, \m c \to \m d\}$
and $\SSS = \{\m g(y,x,x) \to y, \m g(x,x,y) \to y\}$) then $\nCR(\RR)$, but $\CR(\RR \cup \SSS)$.
Hence $\vclhs\SSS$ (or $\vcsubset\RR$) cannot be dropped from \rTH{mod}.
The relaxation on the variable conditions sometimes is helpful:

\begin{example}
Consider the non-confluent $\RR_5$ of \rEX{aut ex} and $\SSS = \{\m g(x) \to y\}$.
By \rTH{mod} and $\nCR(\RR_5)$ we immediately conclude $\nCR(\RR_5 \cup \SSS)$.
Note that the proof in \rEX{aut ex} is not applicable to $\RR_5 \cup \SSS$, since
$\vcsubset{\RR_5 \cup \SSS}$ does not hold.
\end{example}

\paragraph{Acknowledgments}
We thank Thomas Sternagel for his formalized breadth-first search algorithm,
and Bertram Felgenhauer and Harald Zankl for integrating 
CPF-export into \CSI. The authors are listed in alphabetical order regardless of individual
contributions or seniority.

\bibliographystyle{plain}
\bibliography{references}
\end{document}